\newtheorem{lemma}{Lemma}
\newtheorem{proposition}{Proposition}
\title{Secrecy Rate Maximization for Intelligent Reflecting Surface Assisted Multi-Antenna Communications}
\author{Hong Shen,~\IEEEmembership{Member,~IEEE,}~Wei Xu,~\IEEEmembership{Senior Member,~IEEE,}~Shulei Gong,\\ Zhenyao He,~and~Chunming Zhao,~\IEEEmembership{Member,~IEEE}\thanks{H. Shen, W. Xu, Z. He, and C. Zhao are with the National Mobile Communications Research Laboratory, Southeast University, Nanjing 210096, China (e-mail: \{shhseu, wxu, hezhenyao, cmzhao\}@seu.edu.cn). S. Gong is with the School of Electronic Science and Engineering, Nanjing University, Nanjing, 210093, China, and also with China Mobile Group Jiangsu Co., Ltd, Nanjing, 210029, China (email: gongshulei@smail.nju.edu.cn).}}
\begin{document}

\maketitle

\begin{abstract}
We investigate transmission optimization for intelligent reflecting surface (IRS) assisted multi-antenna systems from the physical-layer security perspective. The design goal is to maximize the system secrecy rate subject to the source transmit power constraint and the unit modulus constraints imposed on phase shifts at the IRS. To solve this complicated non-convex problem, we develop an efficient alternating algorithm where the solutions to the transmit covariance of the source and the phase shift matrix of the IRS are achieved in closed form and semi-closed forms, respectively. The convergence of the proposed algorithm is guaranteed theoretically.  Simulations results validate the performance advantage of the proposed optimized design.
\end{abstract}

\begin{keywords}
Intelligent reflecting surface (IRS), multi-antenna communications, physical-layer security, secrecy rate.
\end{keywords}

\section{Introduction}
Intelligent reflecting surface (IRS), which consists of a large number of low-cost passive reflecting elements with adjustable phase shifts, has recently been advocated as a cost-effective solution to significantly enhance the performance of wireless communications \cite{Tan2016ICC,Tan2018INFOCOM}. Owing to the tunable phase shifts of all the reflecting elements, the functions of signal enhancement and interference suppression can be achieved by the IRS without the use of active transmitters. \textcolor{black}{Compared to the well known massive multiple-input multiple-output (MIMO) technique, one appealing advantage of applying the IRS is to reduce the system energy consumption and achieve sustainable green 5G and beyond wireless networks.} \textcolor{black}{IRS can be implemented via conventional reflectarrays \cite{Tan2016ICC,Tan2018INFOCOM,Hum2014TAP}, liquid crystal metasurfaces \cite{Foo2017ISAP}, or software defined metamaterials \cite{Liaskos2018CM}. We highlight that the IRS is passive, operates under the full-duplex mode without self-interference and has no noise amplification, which is different from the relay system. Moreover, the IRS uses the passive reflecting elements for signal reflection, and thus differs from the active large intelligent surface (LIS) in \cite{Hu2017VTC,Hu2018TSP} which uses the entire surface for receiving and transmitting signals.}

Some innovative efforts have been devoted to system design and optimization for IRS-aided wireless communications \textcolor{black}{\cite{Wu2019ArxivCM,Wu2018GLOBECOM,Wu2018ArxivTWC,Huang2018ICASSP,Huang2018Arxiv,Wu2018Arxiv,Huang2018GLOBECOM}. Concretely, an overview on the IRS-aided wireless networks including the applications, hardware architecture, beamforming design, channel estimation, and network deployment was provided in \cite{Wu2019ArxivCM}.} The authors of \cite{Wu2018GLOBECOM} maximized the received signal-to-noise ratio (SNR) for a single-user multiple-input
single-output (MISO) system assisted by the IRS, where active transmit beamforming and passive reflected beamforming, i.e., the phase shifts of the IRS, were jointly optimized. In \cite{Wu2018ArxivTWC}, the SNR and signal-to-interference-plus-noise ratio (SINR) constrained transmit power minimization problems were investigated for single-user and multiuser IRS-aided MISO systems, respectively. \textcolor{black}{In particular, an interesting squared power gain regarding the number of reflecting elements was observed in \cite{Wu2018GLOBECOM,Wu2018ArxivTWC}.} Alternatively, the authors of \cite{Huang2018ICASSP} and \cite{Huang2018Arxiv} considered maximizing the sum rate and energy efficiency of a multiuser MISO system, respectively, by jointly optimizing the transmit powers for all users and the phase shifts of the IRS. Different from the above works that focused on continuous phase shifts at the IRS, discrete phase shifts were further concerned for both single-user  \cite{Wu2018Arxiv} and multiuser \cite{Huang2018GLOBECOM} MISO systems. To summarize, IRS has been shown to be beneficial for improving the performance of multi-antenna systems in terms of, e.g., achievable rate and energy efficiency. Nevertheless, to the best of our knowledge, \textcolor{black}{it has rarely been investigated in prior works to optimize the secrecy performance of IRS assisted multi-antenna systems, which is a vital issue when the transmitted signal is subject to interception.}

In this letter, following the philosophy of the physical-layer security \cite{Khisti2010TIT,Li2011TSP}, we aim to maximize the secrecy rate of an IRS-aided multi-antenna system  by jointly optimizing the source transmit covariance and IRS's phase shift matrix.  \textcolor{black}{The problem is quite challenging even when the eavesdropper has a single antenna, which cannot be straightforwardly solved with the techniques developed in \cite{Wu2018GLOBECOM,Wu2018ArxivTWC,Huang2018ICASSP,Huang2018Arxiv,Wu2018Arxiv,Huang2018GLOBECOM}.} To acquire a tractable solution, we propose an alternating algorithm that optimizes one variable with the other fixed. Specifically, we first obtain a closed-form solution to the source transmit covariance. Then, concerning the difficult phase shift matrix optimization, we develop a bisection search based semi-closed form solution via tight bounding. The convergence of the alternating algorithm is proved rigorously. We further extend the proposed algorithm to the general case where the eavesdropper has multiple antennas.

\emph{Notations:} Vectors and matrices are denoted by boldface lower-case and boldface upper-case letters, respectively. $(\cdot)^*$, $(\cdot)^{T}$, and $(\cdot)^{H}$ stand for the conjugate, the transpose, and the Hermitian operations, respectively. $|\cdot|$ and $\|\cdot\|$ denote the absolute value of a scalar and the $\ell_2$ norm of a vector, respectively.  $\text{diag}\{\cdot\}$ represents the diagonal matrix whose diagonals are the elements of the input vector. $\text{tr}(\cdot)$, $\det(\cdot)$, and $\lambda_\text{max}(\cdot)$ return the trace, the determinant, and the maximum eigenvalue of the input matrix, respectively. $\mathbb{E}\{\cdot\}$ is the expectation operation. $\Re(\cdot)$ and $\text{arg}(\cdot)$ return the real part and the phase of the input complex number, respectively. By $\mathbf A \succeq \mathbf 0$, we mean that matrix $\mathbf A$ is positive semidefinite.

\section{System Model and Problem Formulation}\label{sec:model}
\subsection{System Model Description}
Consider an IRS assisted multi-antenna system including one source (Alice), one IRS, one legitimate receiver (Bob), and one eavesdropper (Eve)\textcolor{black}{, as depicted in Fig.~\ref{fig:model}}. Alice has $N$ antennas, IRS has $L$  low-cost passive reflecting elements, and both Bob and Eve are single-antenna nodes. \textcolor{black}{Note that we assume that the power of the signals reflected by the IRS two or more times is quite small and thus neglected. Moreover, we consider maximal reflection without loss at the IRS since each reflecting element of the IRS should be designed to maximize the power of the reflected signal.}

When Alice transmits a secret message to Bob, Bob receives the signals from both Alice and the IRS since the IRS reflects the signals from Alice. Accordingly, the received signal at Bob is ${y}_B\!=\!\mathbf{h}_{IB}^H\mathbf{\Theta}\mathbf{H}_{AI}\mathbf x\!+\!
\mathbf{h}_{AB}^H\mathbf x\!+\!{n}_B$
where $\mathbf{h}_{IB}^H$ is the IRS-to-Bob channel, $\mathbf{\Theta}=\text{diag}\{[e^{j\phi_1},\cdots,e^{j\phi_i},\cdots,e^{j\phi_{L}}]\}$ is a diagonal matrix with $\phi_i$ denoting the phase shift incurred by the $i$-th reflecting element of the IRS, $\mathbf{H}_{AI}$ is the Alice-to-IRS channel, $\mathbf x$ is Alice's transmit signal whose covariance matrix $\mathbf W \!\triangleq \!\mathbb{E}\{\mathbf x\mathbf x^H\}$ satisfies $ \text{tr}(\mathbf W) \!\leq \!P$ with $P$ denoting the maximum transmit power, $\mathbf{h}_{AB}^H$ is the Alice-to-Bob channel, and ${n}_B$ is the additive white Gaussian noise (AWGN) at Bob with variance $\sigma_{n,B}^2$. Similarly, the signal received by Eve is ${y}_E\!=\!\mathbf{h}_{IE}^H\mathbf{\Theta}\mathbf{H}_{AI}\mathbf x\!+\!
\mathbf{h}_{AE}^H\mathbf x\!+\!{n}_E$
where $\mathbf{h}_{IE}^H$, $\mathbf{h}_{AE}^H$, and ${n}_E$ are the IRS-to-Eve channel, the Alice-to-Eve channel, and the AWGN at Eve with variance $\sigma_{n,E}^2$, respectively.

\begin{figure}[t]
    \begin{center}
      \epsfxsize=7.0in\includegraphics[scale=0.4]{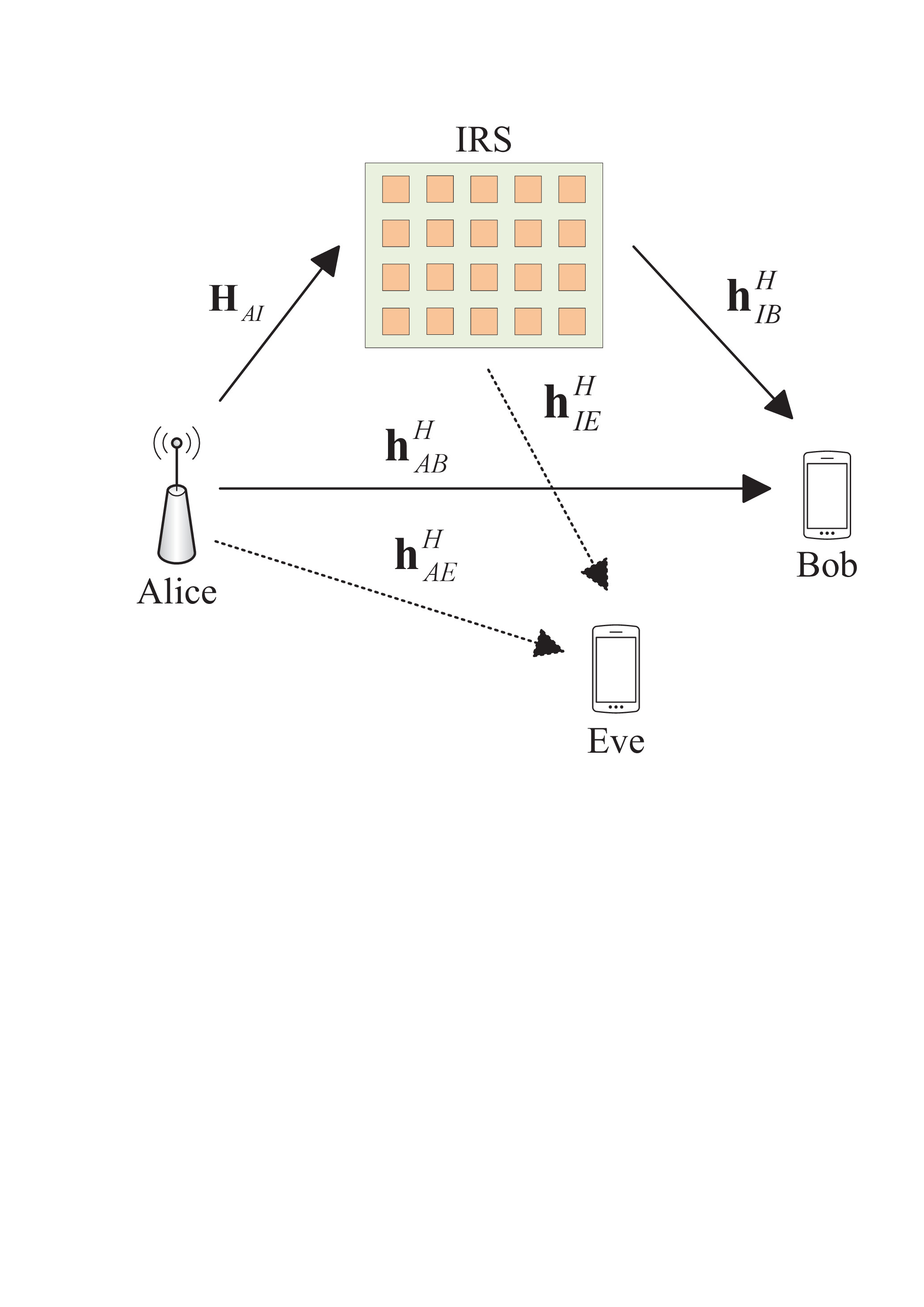}
      \caption{An IRS-aided wireless communication system subject to interception.}\label{fig:model}
    \end{center}
\end{figure}
\subsection{Secrecy Rate Maximization Problem}
To enhance the security of the above system from the physical layer perspective, we jointly optimize $\mathbf{W}$ and $\mathbf{\Theta}$ such that the system secrecy rate is maximized. Specifically, the achievable secrecy rate is given by
\begin{align}
& R_s(\mathbf{W},\mathbf{\Theta})=\nonumber \\
&\log_2\left(1\!+\!\frac{(\mathbf{h}_{IB}^H\mathbf{\Theta}\mathbf{H}_{AI}+
\mathbf{h}_{AB}^H)\mathbf W(\mathbf{H}_{AI}^H\mathbf{\Theta}^H\mathbf{h}_{IB}+
\mathbf{h}_{AB})}{\sigma_{n,B}^2}\right)\!\!-\nonumber \\
&\log_2\left(\!\!1+\!\!\frac{(\mathbf{h}_{IE}^H\mathbf{\Theta}\mathbf{H}_{AI}\!\!+\!\!
\mathbf{h}_{AE}^H)\mathbf W(\mathbf{H}_{AI}^H\mathbf{\Theta}^H\mathbf{h}_{IE}\!\!+\!\!
\mathbf{h}_{AE})}{\sigma_{n,E}^2}\!\!\right). \label{eq:secrate}
\end{align}
Furthermore, we need to impose a power constraint on $\mathbf W$ and unit modulus constraints on the diagonals of $\mathbf{\Theta}$. Accordingly, we formulate the problem of interest as
\begin{align}\label{eq:SecRateMaxProb}
\mathop{\text{maximize}}\limits_{\mathbf{W} \succeq \mathbf 0,\mathbf{\Theta}} \quad &
R_s(\mathbf{W},\mathbf{\Theta})\nonumber\\
\text{subject to}\quad &  \text{tr}(\mathbf W) \leq P,\ |\theta_i|=1,\ i=1,\cdots,L,
\end{align}
where $\theta_i$ is the $i$-th diagonal of $\mathbf{\Theta}$.
It is non-trivial to solve this problem since the optimization variables are coupled in the objective function and there exist unit modulus constraints which are usually hard to handle.

\section{Alternating Algorithm for Secrecy Rate Maximization Problem}\label{sec:SecRateSingle}
In this section, we develop an efficient algorithm for problem \eqref{eq:SecRateMaxProb}  which optimizes $\mathbf W$ and $\mathbf{\Theta}$ in an alternating manner. In particular, the optimal solution to $\mathbf W$ with fixed $\mathbf{\Theta}$ has a closed form, while the optimization of  $\mathbf{\Theta}$ with given $\mathbf W$ admits a semi-closed form solution.
\subsection{Closed-Form Solution to $\mathbf W$ With Given $\mathbf{\Theta}$}
By fixing $\mathbf{\Theta}$, the optimization with respect to $\mathbf W$ becomes:
\begin{align}\label{eq:SecRateMaxProbwa}
\mathop{\text{maximize}}\limits_{\mathbf{W} \succeq \mathbf 0} \quad & \frac{\mathbf{h}_B^H\mathbf W\mathbf{h}_B+\sigma_{n,B}^2}{\mathbf{h}_E^H\mathbf W\mathbf{h}_E+\sigma_{n,E}^2} \
\text{subject to}\quad \text{tr}(\mathbf W) \leq P,
\end{align}
where $\mathbf{h}_B=\mathbf{H}_{AI}^H\mathbf{\Theta}^H\mathbf{h}_{IB}+
\mathbf{h}_{AB}$ and $\mathbf{h}_E=\mathbf{H}_{AI}^H\mathbf{\Theta}^H\mathbf{h}_{IE}+
\mathbf{h}_{AE}$. According to \cite{Khisti2010TIT,Li2011TSP}, the optimal solution to $\mathbf{W}$ is
\begin{align}\label{eq:WAopt}
\mathbf{W}^{\star}={P}\mathbf {\tilde w}\mathbf {\tilde w}^H,
\end{align}
where $\mathbf {\tilde w}$ is the normalized dominant generalized eigenvector of the matrix pencil
$(P\mathbf{h}_B\mathbf{h}_B^H+{\sigma_{n,B}^2}\mathbf I,P \mathbf{h}_E\mathbf{h}_E^H+{\sigma_{n,E}^2}\mathbf I)$.

\subsection{Optimization of $\mathbf \Theta$ With Given $\mathbf W$}
Now let us perform optimization over $\mathbf \Theta$ by fixing $\mathbf W$. The corresponding problem is
\begin{align}\label{eq:SecRateMaxProbtheta}
\mathop{\text{maximize}}\limits_{\mathbf{\Theta}} \quad &\frac{|(\mathbf{h}_{IB}^H\mathbf{\Theta}\mathbf{H}_{AI}+
\mathbf{h}_{AB}^H)\mathbf w|^2+\sigma_{n,B}^2}{|(\mathbf{h}_{IE}^H\mathbf{\Theta}\mathbf{H}_{AI}+
\mathbf{h}_{AE}^H)\mathbf w|^2+\sigma_{n,E}^2} \nonumber\\
\text{subject to}\quad & |\theta_i|=1,\ i=1,\cdots,L,
\end{align}
where we define $\mathbf {w}=\sqrt{P}\mathbf {\tilde w}$.
Different from the optimization of $\mathbf W$, it is quite hard to achieve the optimal solution to this problem due to the unit modulus constraints.

To solve problem \eqref{eq:SecRateMaxProbtheta}, we first define $\boldsymbol \theta=[\theta_1^*,\cdots,\theta_{L}^*]^T$. Then, by invoking the equality $\mathbf a^H \mathbf{\Theta} \mathbf b=\boldsymbol \theta^H\text{diag}\{\mathbf a^H\} \mathbf b$, we rewrite problem \eqref{eq:SecRateMaxProbtheta} by
\begin{align}\label{eq:SecRateMaxProbtheta1}
\mathop{\text{maximize}}\limits_{\boldsymbol{\theta}} \quad &\frac{|\boldsymbol \theta^H \boldsymbol \alpha_B+\tilde \alpha_B |^2+\sigma_{n,B}^2}{|\boldsymbol \theta^H \boldsymbol \alpha_E+\tilde \alpha_E|^2+\sigma_{n,E}^2} \nonumber\\
\text{subject to}\quad & |\theta_i|=1,\ i=1,\cdots,L,
\end{align}
where $\boldsymbol \alpha_B\!\!=\!\!\text{diag}\{\mathbf{h}_{IB}^H\}\mathbf{H}_{AI}\mathbf w$, $\tilde \alpha_B\!\!=\!\!\mathbf{h}_{AB}^H\mathbf w$, $\boldsymbol \alpha_E\!\!=\!\!\text{diag}\{\mathbf{h}_{IE}^H\}\mathbf{H}_{AI}\mathbf w$, and $\tilde \alpha_E\!=\!\mathbf{h}_{AE}^H\mathbf w$. Rewrite problem \eqref{eq:SecRateMaxProbtheta1} by
\begin{align}\label{eq:SecRateMaxProbtheta2}
\mathop{\text{minimize}}\limits_{\boldsymbol{\theta}} \quad &\frac{|\boldsymbol \theta^H \boldsymbol \alpha_E+\tilde \alpha_E|^2+\sigma_{n,E}^2}{|\boldsymbol \theta^H \boldsymbol \alpha_B+\tilde \alpha_B |^2+\sigma_{n,B}^2} \nonumber\\
\text{subject to}\quad & |\theta_i|=1,\ i=1,\cdots,L.
\end{align}
This problem belongs to fractional programming. Following \cite{Dinkelbach1967}, we consider the corresponding parametric program:
\begin{align}\label{eq:SecRateMaxProbtheta3}
\mathop{\text{minimize}}\limits_{\boldsymbol \theta} \quad &|\boldsymbol \theta^H \boldsymbol \alpha_E\!+\!\tilde \alpha_E|^2\!+\!\sigma_{n,E}^2\!-\!\mu(|\boldsymbol \theta^H \boldsymbol \alpha_B\!+\!\tilde \alpha_B |^2\!+\!\sigma_{n,B}^2) \nonumber\\
\text{subject to}\quad & |\theta_i|=1,\ i=1,\cdots,L,
\end{align}
where $\mu \geq 0$ is an introduced parameter. Denote the optimal objective value of this problem by $\psi^\star(\mu)$. Then, the optimal objective value of problem \eqref{eq:SecRateMaxProbtheta2} is the unique root of  $\psi^\star(\mu)=0$ \cite{Dinkelbach1967}. Finding the root requires solving problem \eqref{eq:SecRateMaxProbtheta3} with given $\mu$, which is, however, still non-convex and hard to be solved. To make it more tractable, we minimize an upper bound of its objective function, which is given by \cite[Example 13]{Sun2017TSP}
\begin{align}\label{eq:objupper}
& {|\boldsymbol \theta^H \boldsymbol \alpha_E+\tilde \alpha_E|^2+\sigma_{n,E}^2}-\mu(|\boldsymbol \theta^H \boldsymbol \alpha_B+\tilde \alpha_B |^2+\sigma_{n,B}^2) \nonumber\\
& = \!\!\boldsymbol \theta^H (\boldsymbol \alpha_E\boldsymbol \alpha_E^H\!\!-\!\!\mu\boldsymbol \alpha_B\boldsymbol \alpha_B^H) \boldsymbol \theta\!\!-\!\!2\Re\{\boldsymbol \theta^H( \mu\tilde \alpha_B^*\boldsymbol \alpha_B\!\!-\!\!\tilde \alpha_E^*\boldsymbol \alpha_E)\}\!\!+\!\!|\tilde \alpha_E|^2\nonumber\\
& +\!\!\sigma_{n,E}^2\!\!-\!\!\mu|\tilde \alpha_B|^2\!\!-\!\!\mu\sigma_{n,B}^2 \!\!\leq \!\!\lambda_{\text{max}}(\mathbf \Phi)\|\boldsymbol \theta\|^2\!\!-\!\!2\Re\{\boldsymbol \theta^H\boldsymbol \beta\}\!\!+\!\!c,
\end{align}
where $\mathbf \Phi=\boldsymbol \alpha_E\boldsymbol \alpha_E^H-\mu\boldsymbol \alpha_B\boldsymbol \alpha_B^H$, $\boldsymbol \beta=(\lambda_{\text{max}}(\mathbf \Phi)\mathbf I-\mathbf \Phi)\boldsymbol {\tilde \theta}+\mu\tilde \alpha_B^*\boldsymbol \alpha_B-\tilde \alpha_E^*\boldsymbol \alpha_E$, $c=\boldsymbol {\tilde \theta}^H(\lambda_{\text{max}}(\mathbf \Phi)\mathbf I-\mathbf \Phi) \boldsymbol {\tilde \theta}+|\tilde \alpha_E|^2+\sigma_{n,E}^2-\mu|\tilde \alpha_B|^2-\mu\sigma_{n,B}^2$, and $\boldsymbol {\tilde \theta}$ is the solution to $\boldsymbol \theta$ obtained in the previous iteration of the alternating algorithm.
The simplified optimization problem becomes
\begin{align}\label{eq:SecRateMaxProbtheta4}
\mathop{\text{minimize}}\limits_{\boldsymbol \theta} \quad &\lambda_{\text{max}}(\mathbf \Phi)\|\boldsymbol \theta\|^2-2\Re\{\boldsymbol \theta^H\boldsymbol \beta\} \nonumber\\
\text{subject to}\quad & |\theta_i|=1,\ i=1,\cdots,L.
\end{align}
Clearly, $\|\boldsymbol \theta\|^2=L$ since $|\theta_i|=1$. Moreover, $\Re\{\boldsymbol \theta^H\boldsymbol \beta\}$ is maximized when the phases of $\theta_i$ and $\beta_i$ are equal, where $\beta_i$ is the $i$-th entry of $\boldsymbol \beta$. Therefore, the optimal solution to problem \eqref{eq:SecRateMaxProbtheta4} with given $\mu$ is
\begin{align}\label{eq:thetaopt}
\boldsymbol \theta^{\star}(\mu)=[e^{j\text{arg}(\beta_1)},\cdots,e^{j\text{arg}(\beta_{L})}]^T.
\end{align}
\textcolor{black}{Although problem \eqref{eq:SecRateMaxProbtheta3} can also be handled by the semidefinite relaxation (SDR) \cite{Luo2010SPM}, the above solution has a closed form which is more convenient for implementation and requires much lower complexity especially for large $L$.} Substitute $\boldsymbol \theta^{\star}(\mu)$ into the objective function of problem \eqref{eq:SecRateMaxProbtheta3} and denote the result by $\tilde \psi^\star(\mu)$. Then, the following lemma holds.
\begin{lemma}\label{lemma:psimu}
$\tilde \psi^\star(\mu)$ is a strictly decreasing function.
\end{lemma}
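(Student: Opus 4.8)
\emph{Proof proposal.} The plan is to view $\tilde\psi^\star(\mu)$ as, in effect, the optimal value of a Dinkelbach-type parametric program and to adapt the classical monotonicity argument for such programs. Write $g(\boldsymbol\theta)\triangleq|\boldsymbol\theta^H\boldsymbol\alpha_E+\tilde\alpha_E|^2+\sigma_{n,E}^2$ and $h(\boldsymbol\theta)\triangleq|\boldsymbol\theta^H\boldsymbol\alpha_B+\tilde\alpha_B|^2+\sigma_{n,B}^2$, so that the objective of \eqref{eq:SecRateMaxProbtheta3} is $g(\boldsymbol\theta)-\mu h(\boldsymbol\theta)$ and $\tilde\psi^\star(\mu)=g(\boldsymbol\theta^\star(\mu))-\mu h(\boldsymbol\theta^\star(\mu))$, with $\boldsymbol\theta^\star(\mu)$ the closed-form point \eqref{eq:thetaopt}. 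The single structural fact that will force a \emph{strict}, rather than merely weak, decrease is $h(\boldsymbol\theta)\ge\sigma_{n,B}^2>0$ on the feasible set.

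Concretely, I would fix $0\le\mu_1<\mu_2$, abbreviate $\boldsymbol\theta_i\triangleq\boldsymbol\theta^\star(\mu_i)$, and chain
\begin{align}
\tilde\psi^\star(\mu_2)&=g(\boldsymbol\theta_2)-\mu_2 h(\boldsymbol\theta_2)\nonumber\\
&\le g(\boldsymbol\theta_1)-\mu_2 h(\boldsymbol\theta_1)\nonumber\\
&< g(\boldsymbol\theta_1)-\mu_1 h(\boldsymbol\theta_1)=\tilde\psi^\star(\mu_1),\nonumber
\end{align}
where the strict inequality uses only $\mu_2>\mu_1$ and $h(\boldsymbol\theta_1)>0$. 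The remaining step is the cross-optimality claim that $\boldsymbol\theta^\star(\mu_2)$ is no worse than $\boldsymbol\theta^\star(\mu_1)$ for the $\mu_2$-parametrized objective. To obtain it I would exploit the majorization structure introduced above: the upper bound \eqref{eq:objupper}, built at the reference point $\boldsymbol{\tilde\theta}$, coincides with $g-\mu h$ at $\boldsymbol{\tilde\theta}$, dominates it everywhere, and is exactly minimized over the unit-modulus set by \eqref{eq:thetaopt}; combined with the fact that $\mathbf\Phi=\boldsymbol\alpha_E\boldsymbol\alpha_E^H-\mu\boldsymbol\alpha_B\boldsymbol\alpha_B^H$ is affine and nonincreasing in $\mu$ in the positive semidefinite order, this should control the nonnegative gap $(\boldsymbol\theta-\boldsymbol{\tilde\theta})^H(\lambda_{\text{max}}(\mathbf\Phi)\mathbf I-\mathbf\Phi)(\boldsymbol\theta-\boldsymbol{\tilde\theta})$ and deliver the desired ordering.

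I expect this cross-optimality step to be the only real obstacle, precisely because $\boldsymbol\theta^\star(\mu)$ minimizes the surrogate \eqref{eq:SecRateMaxProbtheta4}, not the true parametric objective $g-\mu h$, so one cannot merely invoke ``the minimizer beats any competitor.'' Should the direct accounting of that gap prove awkward, the fallback is an infinitesimal version: differentiate $\tilde\psi^\star(\mu)=g(\boldsymbol\theta^\star(\mu))-\mu h(\boldsymbol\theta^\star(\mu))$ in $\mu$, note that the explicit $\mu$-dependence contributes $-h(\boldsymbol\theta^\star(\mu))\le-\sigma_{n,B}^2<0$, and use the first-order optimality of \eqref{eq:thetaopt} for \eqref{eq:SecRateMaxProbtheta4} (the stationarity $\theta_i^\star=e^{j\text{arg}(\beta_i)}$, i.e.\ the gradient of the objective of \eqref{eq:SecRateMaxProbtheta4} is normal to the unit-modulus torus at $\boldsymbol\theta^\star(\mu)$) to bound the $\partial\boldsymbol\theta^\star/\partial\mu$ contribution and conclude $\frac{d\tilde\psi^\star}{d\mu}<0$. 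Either route, together with continuity and the limiting behavior of $\tilde\psi^\star$ at the endpoints, also gives that $\tilde\psi^\star$ has a unique root, which is what legitimizes the subsequent bisection search.
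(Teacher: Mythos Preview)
Your route is essentially the paper's. The paper resolves exactly the ``cross-optimality'' step you flag by sandwiching through the majorizer: with the notation $f(\boldsymbol\theta\mid\mu)$ for the true parametric objective and $g(\boldsymbol\theta\mid(\mu,\boldsymbol{\tilde\theta}))$ for the upper bound in \eqref{eq:objupper}, it takes the reference point $\boldsymbol{\tilde\theta}=\boldsymbol\theta^\star(\mu_1)$ and writes the single chain
\[
f(\boldsymbol\theta^\star(\mu_2)\mid\mu_2)\;\le\;g(\boldsymbol\theta^\star(\mu_2)\mid(\mu_2,\boldsymbol\theta^\star(\mu_1)))\;\le\;g(\boldsymbol\theta^\star(\mu_1)\mid(\mu_2,\boldsymbol\theta^\star(\mu_1)))\;=\;f(\boldsymbol\theta^\star(\mu_1)\mid\mu_2)\;<\;f(\boldsymbol\theta^\star(\mu_1)\mid\mu_1),
\]
using in turn the majorization $f\le g$, the fact that \eqref{eq:thetaopt} minimizes $g$, the tightness $g(\boldsymbol\theta\mid(\mu,\boldsymbol\theta))=f(\boldsymbol\theta\mid\mu)$, and finally $h(\boldsymbol\theta^\star(\mu_1))\ge\sigma_{n,B}^2>0$, which is exactly your source of strictness. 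So the three properties you already isolated (tight at the reference, dominates everywhere, minimized in closed form by \eqref{eq:thetaopt}) are all that is used; your additional observations about the semidefinite monotonicity of $\mathbf\Phi$ in $\mu$ and the derivative-based fallback are not needed.
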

\begin{proof}
See Appendix~\ref{app:lemma1}.
\end{proof}
Based on \emph{Lemma}~\ref{lemma:psimu} and the facts that $\tilde \psi^\star(0)>0$ and $\tilde \psi^\star(+\infty)<0$, we conclude that $\tilde \psi^\star(\mu)=0$ has a unique root (denoted by \textcolor{black}{$\mu^{'}$}), which can be determined via bisection search. Then, we obtain the solution to $\boldsymbol \theta$ by $\boldsymbol \theta^{\star}(\textcolor{black}{\mu^{'}})$.

We summarize the proposed algorithm that solves problem \eqref{eq:SecRateMaxProb} in Algorithm~\ref{alg1}. \textcolor{black}{Note that problem \eqref{eq:SecRateMaxProb} is always feasible and Algorithm~\ref{alg1} yields a feasible solution.} Furthermore, we can prove that Algorithm~\ref{alg1} must converge based on the following proposition.
\begin{proposition}\label{prop:larger}
The objective value of problem \eqref{eq:SecRateMaxProbtheta1} in the current iteration is no smaller than that in the previous iteration, i.e., $\gamma(\boldsymbol \theta^{\star}(\textcolor{black}{\mu^{'}})) \geq \gamma(\boldsymbol {\tilde \theta})$, where $\gamma(\boldsymbol \theta)$ denotes the objective function of problem \eqref{eq:SecRateMaxProbtheta1}.
\end{proposition}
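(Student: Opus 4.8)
The plan is to tie the bisection output $\mu^{'}$ back to the true ratio $\gamma$ by combining two facts: the majorization property of the surrogate used in \eqref{eq:objupper}, and the strict monotonicity of $\tilde\psi^\star$ from \emph{Lemma}~\ref{lemma:psimu}. Throughout I would write $f_B(\boldsymbol\theta)\triangleq|\boldsymbol\theta^H\boldsymbol\alpha_B+\tilde\alpha_B|^2+\sigma_{n,B}^2$ and $f_E(\boldsymbol\theta)\triangleq|\boldsymbol\theta^H\boldsymbol\alpha_E+\tilde\alpha_E|^2+\sigma_{n,E}^2$, so that $\gamma(\boldsymbol\theta)=f_B(\boldsymbol\theta)/f_E(\boldsymbol\theta)$ and the objective of \eqref{eq:SecRateMaxProbtheta3} is $\psi(\boldsymbol\theta,\mu)=f_E(\boldsymbol\theta)-\mu f_B(\boldsymbol\theta)$. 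I also introduce $g(\boldsymbol\theta,\mu)\triangleq\lambda_{\text{max}}(\mathbf\Phi)\|\boldsymbol\theta\|^2-2\Re\{\boldsymbol\theta^H\boldsymbol\beta\}+c$ for the right-hand side of \eqref{eq:objupper}, and set $\tilde\mu\triangleq 1/\gamma(\boldsymbol{\tilde\theta})=f_E(\boldsymbol{\tilde\theta})/f_B(\boldsymbol{\tilde\theta})$, which is well defined and positive since $\sigma_{n,B}^2>0$ forces $f_B>0$.

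First I would record the ``surrogate sandwich.'' By \eqref{eq:objupper}, $g(\boldsymbol\theta,\mu)\ge\psi(\boldsymbol\theta,\mu)$ for every feasible $\boldsymbol\theta$ (a consequence of $\lambda_{\text{max}}(\mathbf\Phi)\mathbf I\succeq\mathbf\Phi$), and a short expansion using the stated expressions for $\boldsymbol\beta$ and $c$ shows the bound is tight at the previous iterate, $g(\boldsymbol{\tilde\theta},\mu)=\psi(\boldsymbol{\tilde\theta},\mu)$. Since $\boldsymbol\theta^{\star}(\mu)$ in \eqref{eq:thetaopt} minimizes $g(\cdot,\mu)$ over the unit-modulus set, $g(\boldsymbol\theta^{\star}(\mu),\mu)\le g(\boldsymbol{\tilde\theta},\mu)$. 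Recalling that $\tilde\psi^\star(\mu)=\psi(\boldsymbol\theta^{\star}(\mu),\mu)$ by definition, chaining these facts yields, for all $\mu\ge 0$,
\[
\tilde\psi^\star(\mu)\le g(\boldsymbol\theta^{\star}(\mu),\mu)\le g(\boldsymbol{\tilde\theta},\mu)=\psi(\boldsymbol{\tilde\theta},\mu)=f_E(\boldsymbol{\tilde\theta})-\mu f_B(\boldsymbol{\tilde\theta}).
\]

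Next I would evaluate this at $\mu=\tilde\mu$, where the right-hand side vanishes, giving $\tilde\psi^\star(\tilde\mu)\le 0$. By \emph{Lemma}~\ref{lemma:psimu}, $\tilde\psi^\star$ is strictly decreasing, and by construction $\tilde\psi^\star(\mu^{'})=0$; hence $\tilde\psi^\star(\tilde\mu)\le 0=\tilde\psi^\star(\mu^{'})$ forces $\tilde\mu\ge\mu^{'}$, while $\tilde\psi^\star(0)>0$ together with strict monotonicity gives $\mu^{'}>0$. Finally, the root identity $\tilde\psi^\star(\mu^{'})=0$ reads $f_E(\boldsymbol\theta^{\star}(\mu^{'}))-\mu^{'} f_B(\boldsymbol\theta^{\star}(\mu^{'}))=0$, and since $f_B(\boldsymbol\theta^{\star}(\mu^{'}))>0$ this is equivalent to $\gamma(\boldsymbol\theta^{\star}(\mu^{'}))=1/\mu^{'}$. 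Combining with $0<\mu^{'}\le\tilde\mu$ gives $\gamma(\boldsymbol\theta^{\star}(\mu^{'}))=1/\mu^{'}\ge 1/\tilde\mu=\gamma(\boldsymbol{\tilde\theta})$, which is the claim.

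The only delicate point is the tightness $g(\boldsymbol{\tilde\theta},\mu)=\psi(\boldsymbol{\tilde\theta},\mu)$ in the first step: it is merely a routine algebraic verification from the definitions of $\boldsymbol\beta$ and $c$, but it is the hinge that makes the chain of inequalities collapse onto the previous objective value $\psi(\boldsymbol{\tilde\theta},\mu)$; without it one would only get an upper bound that need not pass through the previous iterate, and the ascent conclusion would fail. Everything else is bookkeeping with the Dinkelbach parametrization and the monotonicity already established in \emph{Lemma}~\ref{lemma:psimu}.
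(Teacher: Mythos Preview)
Your argument is correct and rests on exactly the same ingredients as the paper's: the tightness $g(\boldsymbol{\tilde\theta},\mu)=\psi(\boldsymbol{\tilde\theta},\mu)$, the fact that $\boldsymbol\theta^\star(\mu)$ minimizes $g(\cdot,\mu)$, the majorization $g\ge\psi$, and the root identity $\tilde\psi^\star(\mu')=0$ giving $\gamma(\boldsymbol\theta^\star(\mu'))=1/\mu'$. The only difference is where the chain is evaluated: the paper plugs in $\mu=\mu'$ directly, obtaining $f_E(\boldsymbol{\tilde\theta})-\mu' f_B(\boldsymbol{\tilde\theta})\ge\tilde\psi^\star(\mu')=0$ and hence $\gamma(\boldsymbol{\tilde\theta})\le 1/\mu'$ in one step, whereas you evaluate at $\mu=\tilde\mu$ and then invoke \emph{Lemma}~\ref{lemma:psimu} to compare $\mu'$ with $\tilde\mu$. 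Your route is a slight detour (Lemma~\ref{lemma:psimu} is not actually needed for the proposition), but it is equally valid.
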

\begin{proof}
See Appendix~\ref{app:prop1}.
\end{proof}
Owing to the above conclusion and the optimality of $\mathbf W^{\star}$, the objective value of problem \eqref{eq:SecRateMaxProb} is non-decreasing after each iteration of Algorithm~\ref{alg1}. Moreover, the objective value has a finite upper bound. Therefore, Algorithm~\ref{alg1} always converges. \textcolor{black}{Note that transforming problem \eqref{eq:SecRateMaxProbtheta1} into problem \eqref{eq:SecRateMaxProbtheta2} is necessary since otherwise the alternating algorithm does not converge as verified via numerical tests.}

\begin{algorithm}[t]
\caption{\textbf{Alternating algorithm for problem \eqref{eq:SecRateMaxProb}}}\label{alg1}

\begin{algorithmic}[1]

\STATE \textit{Initialization:}  set initial $\boldsymbol {\tilde \Theta}$ and convergence accuracy $\epsilon$.
\STATE \textbf{repeat}

\STATE \text{  } Fix $\boldsymbol \Theta=\boldsymbol {\tilde \Theta}$ and calculate $\mathbf W^{\star}$ using \eqref{eq:WAopt}.

\STATE \text{  } Fix $\mathbf W=\mathbf W^{\star}$ and find the root of $\tilde \psi^\star(\mu)=0$, i.e., \textcolor{black}{$\mu^{'}$}, using bisection search and \eqref{eq:thetaopt}.

\STATE \text{  } Set $\boldsymbol {\tilde \Theta}=\text{diag}\{\boldsymbol { \theta}^{\star}(\textcolor{black}{\mu^{'}})\}$.

\STATE \textbf{until} convergence.
\end{algorithmic}
\end{algorithm}
\section{Extension for Multi-Antenna Eve}\label{sec:SecRateMulti}
The proposed algorithm can also be extended to address the general case where Eve has $M$ antennas ($M>1$). For this scenario, the secrecy rate becomes 
\begin{align}
&R_s(\mathbf{W},\!\mathbf{\Theta})\!\!=  \nonumber \\ &\log_2\!\left(\!\!1\!\!+\!\!\frac{(\mathbf{h}_{IB}^H\mathbf{\Theta}\mathbf{H}_{AI}\!\!+\!\!
\mathbf{h}_{AB}^H)\mathbf W(\mathbf{H}_{AI}^H\mathbf{\Theta}^H\mathbf{h}_{IB}\!\!+\!\!
\mathbf{h}_{AB})}{\sigma_{n,B}^2}\!\!\right)\!\!-\nonumber \\
& \log_2\!\det\!\left(\!\!\mathbf I\!\!+\!\!\frac{(\mathbf{H}_{IE}\mathbf{\Theta}\mathbf{H}_{AI}\!\!+\!\!
\mathbf{H}_{AE})\mathbf W(\mathbf{H}_{AI}^H\mathbf{\Theta}^H\mathbf{H}_{IE}^H\!\!+\!\!
\mathbf{H}_{AE}^H)}{\sigma_{n,E}^2}\!\!\right), \label{eq:secrateMul}
\end{align}
where $\mathbf{H}_{IE}=[\mathbf{h}_{IE,1},\cdots,\mathbf{h}_{IE,M}]^H$ and $\mathbf{H}_{AE}=[\mathbf{h}_{AE,1},\cdots,\mathbf{h}_{AE,M}]^H$ denote the IRS-to-Eve channel and the Alice-to-Eve channel, respectively.

By fixing $\mathbf{\Theta}$, the optimal solution to $\mathbf W$ is  \cite{Khisti2010TIT,Li2011TSP}
\setcounter{equation}{12}
\begin{align}\label{eq:WAoptmul}
\mathbf{W}^{\star}={P}\mathbf {\tilde w}\mathbf {\tilde w}^H,
\end{align}
where $\mathbf {\tilde w}$ is the normalized dominant generalized eigenvector of the matrix pencil
$(P\mathbf{h}_B\mathbf{h}_B^H+{\sigma_{n,B}^2}\mathbf I,P \mathbf{H}_E\mathbf{H}_E^H+{\sigma_{n,E}^2}\mathbf I)$ with $\mathbf{H}_E=\mathbf{H}_{AI}^H\mathbf{\Theta}^H\mathbf{H}_{IE}^H+
\mathbf{H}_{AE}^H$. On the other hand, based on \eqref{eq:WAoptmul} and $\det(\mathbf I+\mathbf {AB})=\det(\mathbf I+\mathbf {BA})$,  the optimization of $\mathbf \Theta$ with fixed $\mathbf W$ can be expressed by
\begin{align}\label{eq:SecRateMaxProbthetaMul}
\mathop{\text{maximize}}\limits_{\mathbf{\Theta}} \quad &\frac{|(\mathbf{h}_{IB}^H\mathbf{\Theta}\mathbf{H}_{AI}+
\mathbf{h}_{AB}^H)\mathbf w|^2+\sigma_{n,B}^2}{\sum_{i=1}^{M}|(\mathbf{h}_{IE,i}^H\mathbf{\Theta}\mathbf{H}_{AI}+
\mathbf{h}_{AE,i}^H)\mathbf w|^2+\sigma_{n,E}^2} \nonumber\\
\text{subject to}\quad & |\theta_i|=1,\ i=1,\cdots,L.
\end{align}
The method of solving problem \eqref{eq:SecRateMaxProbtheta}  can be adopted with some modifications to address this problem.  The differences are as follows. Firstly, when calculating $\boldsymbol {\theta}^{\star}(\mu)$ using \eqref{eq:thetaopt},  $\boldsymbol \beta$ now becomes $(\lambda_{\text{max}}(\mathbf \Phi)\mathbf I-\mathbf \Phi)\boldsymbol {\tilde \theta}+\mu\tilde \alpha_B^*\boldsymbol \alpha_B-\sum_{i=1}^{M}\tilde \alpha_{E,i}^*\boldsymbol \alpha_{E,i}$ where $\mathbf \Phi=\sum_{i=1}^{M}\boldsymbol \alpha_{E,i}\boldsymbol \alpha_{E,i}^H-\mu\boldsymbol \alpha_B\boldsymbol \alpha_B^H$, $\tilde \alpha_{E,i}=\mathbf{h}_{AE,i}^H\mathbf w$, and $\boldsymbol \alpha_{E,i}=\text{diag}\{\mathbf{h}_{IE,i}^H\}\mathbf{H}_{AI}\mathbf w$. Secondly, the function $\tilde \psi^\star(\mu)$ is updated by ${\sum_{i=1}^{M}|(\boldsymbol {\theta}^{\star}(\mu))^H \boldsymbol \alpha_{E,i}+\tilde \alpha_{E,i}|^2+\sigma_{n,E}^2}-\mu(|(\boldsymbol {\theta}^{\star}(\mu))^H \boldsymbol \alpha_B+\tilde \alpha_B |^2+\sigma_{n,B}^2)$, which can also be verified to be a strictly decreasing function.

\section{Simulation Results}
The performance of the proposed secrecy rate maximized design is evaluated via simulations for an IRS assisted multi-antenna system. Moreover, we also consider the case without IRS as a benchmark, where we optimize the transmit covariance of Alice to maximize the secrecy rate \cite{Khisti2010TIT,Li2011TSP}. \textcolor{black}{We set $N=4$, $P_A=15\ \text{dBW}$, and $\sigma_{n,B}^2=\sigma_{n,E}^2=-75\ \text{dBW}$. The small-scale fading of all the channels follows the Rayleigh fading model. The path loss model is given by $\text{PL}=\left(\text{PL}_0-10\zeta\log_{10}\left(\frac{d}{d_0}\right)\right)\ \text{dB}$,
where $\text{PL}_0$ is the path loss at the reference distance $d_0$, $\zeta$ is the path loss exponent, and $d$ is the distance between the transmitter and the receiver. We set $\text{PL}_0=-30\text{ dB}$ and $d_0=1 \text{m}$. The path loss exponents of the Alice-to-IRS link, the IRS-to-Bob link, the IRS-to-Eve link, the Alice-to-Bob link, and the Alice-to-Eve link are set to $\zeta_{AI}=2.2$, $\zeta_{IB}=2.5$, $\zeta_{IE}=2.5$, $\zeta_{AB}=3.5$, and $\zeta_{AE}=3.5$, respectively. The distance between Alice and the IRS is $d_{AI}=50\ \text{m}$. Both Bob and Eve lie in a horizontal line which is parallel to the one between Alice and the IRS. The vertical distance between these two lines is $d_v=2\ \text{m}$. The horizontal distance between Alice and Eve is $d_{AE,h}=44\ \text{m}$.}

\begin{figure}[t]
    \begin{center}
      \epsfxsize=7.0in\includegraphics[scale=0.5]{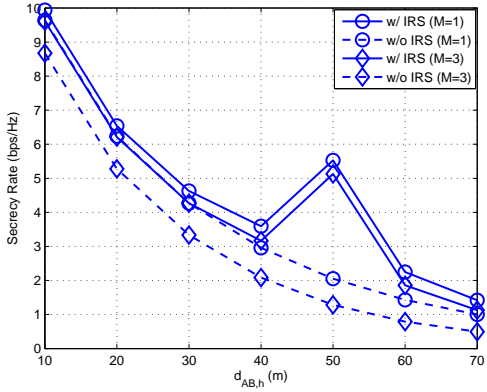}
      \caption{\textcolor{black}{Secrecy rate versus the horizontal distance between Alice and Bob.}}\label{fig:dabh}
    \end{center}
\end{figure}


\begin{figure}[t]
    \begin{center}
      \epsfxsize=7.0in\includegraphics[scale=0.5]{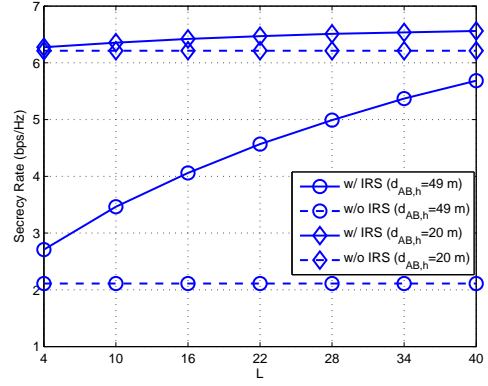}
      \caption{\textcolor{black}{Secrecy rate versus the number of reflecting elements.}}\label{fig:NumReflect}
    \end{center}
\end{figure}
\textcolor{black}{We show the secrecy rate performance by varying $d_{AB,h}$ in Fig.~\ref{fig:dabh}, where $L=32$. From this figure, we observe that the proposed IRS assisted design provides a higher secrecy rate than the conventional scheme without IRS. In particular, when there is no IRS, the secrecy rate of the conventional method gradually decreases with the increase of $d_{AB,h}$ as expected. While for the proposed method, the secrecy rate becomes increasing with respect to $d_{AB,h}$ when $d_{AB,h}\in[40\ \text{m},50\ \text{m}]$. This is due to the fact that the IRS can effectively enhance Bob's achievable rate via reflect beamforming when Bob is close to the IRS. We also find from Fig.~\ref{fig:dabh} that the secrecy rate of the proposed method degrades when $M$ becomes larger. This is because the achievable rate of Eve becomes higher with more antennas. The secrecy rate versus $L$ is shown in Fig.~\ref{fig:NumReflect}, where $M=1$. It can be seen that, when Bob is far from the IRS, the performance of the proposed IRS aided design is not quite sensitive to the value of $L$ since the signal from the IRS is weak at Bob even for large $L$.  When Bob is close to the IRS,  the secrecy rate achieved by the proposed method increases significantly as $L$ gets larger because the signal from the IRS becomes dominant at Bob. \textcolor{black}{Note that the above conclusions also hold after replacing the x-axis of Fig.~\ref{fig:NumReflect} by the area of IRS when the distance between the adjacent reflecting elements is fixed.} During the simulation, the squared power gain revealed in \cite{Wu2018GLOBECOM,Wu2018ArxivTWC} is not observed at Bob or Eve since we aim to maximize the secrecy rate instead of Bob's rate only. Finally, when $L$ is relatively small, we also find via simulations that the proposed method can achieve almost the same secrecy rate as the grid search based optimal solution.}

\section{Conclusions}
We studied the secrecy rate maximization for an IRS assisted multi-antenna system, where Alice's transmit covariance and IRS's phase shift matrix were jointly optimized. We advocated an efficient algorithm to optimize the two variables in an alternating manner. Closed-form and semi-closed form solutions were successfully obtained for the transmit covariance of Alice and the phase shift matrix of the IRS, respectively. The superiority of the proposed design has been confirmed via simulations.

\begin{appendices}
\section{Proof of \emph{Lemma}~\ref{lemma:psimu}}\label{app:lemma1}
Denote the left-hand side and right-hand side of the inequality in \eqref{eq:objupper} by $f(\boldsymbol \theta|\mu)$ and $g(\boldsymbol \theta|(\mu,\boldsymbol {\tilde \theta}))$, respectively. Suppose that $0 < \mu_1<\mu_2$. Then, we have $\tilde \psi^\star(\mu_2)=f(\boldsymbol \theta^{\star}(\mu_2)|\mu_2) \overset{(a)}{\leq} g((\boldsymbol \theta^{\star}(\mu_2))|(\mu_2,\boldsymbol \theta^{\star}(\mu_1))) \overset{(b)}{\leq}  g((\boldsymbol \theta^{\star}(\mu_1))|(\mu_2,\boldsymbol \theta^{\star}(\mu_1))) \overset{(c)}{=} f(\boldsymbol \theta^{\star}(\mu_1)|\mu_2)\overset{(d)}{<} f(\boldsymbol \theta^{\star}(\mu_1)|\mu_1)=\tilde \psi^\star(\mu_1)$, where (a) follows from \eqref{eq:objupper} with $\boldsymbol {\tilde \theta}=\boldsymbol \theta^{\star}(\mu_1)$, (b) holds because $\boldsymbol \theta^{\star}(\mu_2)$ minimizes $g(\boldsymbol \theta|(\mu_2,\boldsymbol {\tilde \theta}))$, (c) holds due to the equality $f(\boldsymbol \theta|\mu)=g(\boldsymbol \theta|(\mu,\boldsymbol {\theta}))$, and (d) holds because we assume that $\mu_1<\mu_2$.

\section{Proof of \emph{Proposition}~\ref{prop:larger}}\label{app:prop1}
Denote $|\boldsymbol \theta^H \boldsymbol \alpha_E\!+\!\tilde \alpha_E|^2\!+\!\sigma_{n,E}^2$ and $|\boldsymbol \theta^H \boldsymbol \alpha_B\!+\!\tilde \alpha_B |^2\!+\!\sigma_{n,B}^2$  by $f_E(\boldsymbol \theta)$ and $f_B(\boldsymbol \theta)$, respectively. Then, it follows that $f_E(\boldsymbol {\tilde \theta})-\textcolor{black}{\mu^{'}} f_B(\boldsymbol {\tilde \theta})=f(\boldsymbol {\tilde \theta}|\textcolor{black}{\mu^{'}})\overset{(a)}{=}g(\boldsymbol {\tilde \theta}|(\textcolor{black}{\mu^{'}},\boldsymbol {\tilde \theta})) \overset{(b)}{\geq} g(\boldsymbol {\theta}^{\star}(\textcolor{black}{\mu^{'}})|(\textcolor{black}{\mu^{'}},\boldsymbol {\tilde \theta})) \overset{(c)}{\geq} f(\boldsymbol {\theta}^{\star}(\textcolor{black}{\mu^{'}})|\textcolor{black}{\mu^{'}})= f_E(\boldsymbol {\theta}^{\star}(\textcolor{black}{\mu^{'}}))-\textcolor{black}{\mu^{'}} f_B(\boldsymbol {\theta}^{\star}(\textcolor{black}{\mu^{'}}))\overset{\textcolor{black}{(d)}}{=}0$, where (a) follows from $f(\boldsymbol \theta|\mu)=g(\boldsymbol \theta|(\mu,\boldsymbol {\theta}))$, (b) holds since $\boldsymbol \theta^{\star}(\textcolor{black}{\mu^{'}})$ minimizes $g(\boldsymbol \theta|(\textcolor{black}{\mu^{'}},\boldsymbol {\tilde \theta}))$, (c) is due to \eqref{eq:objupper}, \textcolor{black}{and (d) holds because $\textcolor{black}{\mu^{'}}$ is the unique root of $\tilde \psi^{\star}(\mu)=0$}. Therefore, we have $\gamma(\boldsymbol {\tilde \theta})=\frac{f_B(\boldsymbol {\tilde \theta})}{f_E(\boldsymbol {\tilde \theta})} \leq \frac{1}{\textcolor{black}{\mu^{'}}} =\frac{f_B(\boldsymbol {\theta}^{\star}(\textcolor{black}{\mu^{'}}))}{f_E(\boldsymbol {\theta}^{\star}(\textcolor{black}{\mu^{'}}))}=\gamma(\boldsymbol \theta^{\star}(\textcolor{black}{\mu^{'}}))$.

\end{appendices}


\begin{thebibliography}{99}

\bibitem{Tan2016ICC}  X. Tan, Z. Sun, J. M. Jornet, and D. Pados, ``Increasing indoor spectrum sharing capacity using smart reflect-array,'' in \emph{Proc. IEEE Int. Conf. Commun. (ICC)}, Kuala Lumpur, Malaysia, May 2016, pp. 1--6.

\bibitem{Tan2018INFOCOM} X. Tan, Z. Sun, D. Koutsonikolas, and J. M. Jornet, ``Enabling indoor mobile millimeter-wave networks based on smart reflect-arrays,'' in \emph{Proc. IEEE Conf. Comput. Commun. (INFOCOM)}, Honolulu, USA, Apr. 2018, pp. 270--278.

\bibitem{Hum2014TAP}
\textcolor{black}{S. V. Hum and J. Perruisseau-Carrier, ``Reconfigurable reflectarrays and array lenses for dynamic antenna beam control: A review,'' \emph{IEEE Trans. Antennas Propag.}, vol. 62, no. 1, pp. 183--198, Jan. 2014.}

\bibitem{Foo2017ISAP}
\textcolor{black}{S. Foo, ``Liquid-crystal reconfigurable metasurface reflectors,'' in \emph{Proc.
IEEE  Int. Symp. Antennas Propag. (ISAP)}, San Diego, USA, Jul. 2017, pp. 2069--2070.}

 \bibitem{Liaskos2018CM}
\textcolor{black}{C. Liaskos, S. Nie,	A. Tsioliaridou, A. Pitsillides, S. Ioannidis, and 	I. Akyildiz, ``A new wireless communication paradigm
through software-controlled metasurfaces,'' \emph{IEEE Commun. Mag.}, vol.
56, no. 9, pp. 162--169, Sep. 2018.}

\bibitem{Hu2017VTC} \textcolor{black}{S. Hu, F. Rusek, and O. Edfors, ``The potential of using large antenna arrays on intelligent surfaces,'' in \emph{Proc. IEEE Veh. Technol. Conf. (VTC Spring)}, Sydney, Australia, Jun. 2017, pp. 1--6.}

\bibitem{Hu2018TSP} \textcolor{black}{S. Hu, F. Rusek, and O. Edfors, ``Beyond massive MIMO: The potential of data transmission with large intelligent surfaces,'' \emph{IEEE Trans. Signal Process.}, vol. 66, no. 10, pp. 2746--2758, May 2018.}

\bibitem{Wu2019ArxivCM} \textcolor{black}{Q. Wu and R. Zhang, ``Towards smart and reconfigurable environment: Intelligent reflecting surface aided wireless networks,''  \emph{submitted to IEEE Commun. Mag.}, 2019,  [Online] Available: https://arxiv.org/abs/1905.00152.}

\bibitem{Wu2018GLOBECOM} Q. Wu and R. Zhang, ``Intelligent reflecting surface enhanced wireless network: Joint active and passive beamforming design,'' in \emph{Proc. IEEE Global Commun. Conf. (GLOBECOM)}, Abu Dhabi, UAE, Dec. 2018, pp. 1--6.

\bibitem{Wu2018ArxivTWC} Q. Wu and R. Zhang, ``Intelligent reflecting surface enhanced wireless network via joint active and passive beamforming,'' \emph{submitted to IEEE Trans. Wireless Commun.}, 2018, [Online] Available: https://arxiv.org/abs/1810.03961.

\bibitem{Huang2018ICASSP} C. Huang, A. Zappone, M. Debbah, and C. Yuen, ``Achievable rate maximization by passive intelligent mirrors,'' in \emph{Proc. IEEE Int. Conf. Acoust., Speech, Signal Process. (ICASSP)}, Calgary, Canada, Apr. 2018, pp. 1--5.

\bibitem{Huang2018Arxiv} C. Huang, A. Zappone, G. C. Alexandropoulos, M. Debbah, and C. Yuen, ``Large intelligent surfaces for energy efficiency in wireless communication,'' \emph{submitted
  to IEEE Trans. Wireless Commun.}, 2018, [Online] Available: https://arxiv.org/abs/1810.06934.


\bibitem{Wu2018Arxiv} \textcolor{black}{Q. Wu and R. Zhang, ``Beamforming optimization for intelligent reflecting surface with discrete phase shifts,''  in \emph{Proc. IEEE Int. Conf. Acoust., Speech, Signal Process. (ICASSP)}, Brighton, United Kingdom, May 2019, pp. 7830--7833.}

\bibitem{Huang2018GLOBECOM} C. Huang, G. C. Alexandropoulos, A. Zappone, M. Debbah, and C. Yuen, ``Energy efficient multi-user MISO communication using low resolution large intelligent surfaces,'' in \emph{Proc. IEEE Global Commun. Conf. (GLOBECOM)}, Abu Dhabi, UAE, Dec. 2018, pp. 1--6.


\bibitem{Khisti2010TIT}
A. Khisti and G. W. Wornell, ``Secure transmission with multiple antennas
I: The MISOME wiretap channel,'' \emph{IEEE Trans. Inf. Theory}, vol.
56, no. 7, pp. 3088--3104, Jul. 2010.

\bibitem{Li2011TSP}
Q. Li and W.-K. Ma, ``Optimal and robust transmit designs for MISO
channel secrecy by semidefinite programming,'' \emph{IEEE Trans. Signal
Process.}, vol. 59, no. 8, pp. 3799--3812, Aug. 2011.


\bibitem{Dinkelbach1967} {W. Dinkelbach, ``On nonlinear fractional programming,'' \textit{Management Science}, vol. 13, no. 7, pp. 492--498, Mar. 1967.}

\bibitem{Sun2017TSP} Y. Sun, P. Babu, and D. P. Palomar, ``Majorization-minimization algorithms in signal processing, communications, and machine learning,'' \emph{IEEE Trans. Signal Process.}, vol. 65, no. 3, pp. 794--816, Feb. 2017.

\bibitem{Luo2010SPM}
\textcolor{black}{Z.-Q. Luo, W.-K. Ma, A. M.-C. So, Y. Ye, and S. Zhang, ``Semidefinite relaxation of quadratic optimization problems: From its practical deployments and scope of applicability to key theoretical results,'' \emph{IEEE Signal Process. Mag.}, vol. 27, no. 3, pp. 20--34, May 2010.}
\end{thebibliography}
\end{document}